\theoremstyle{definition}
\newtheorem{thm}{Theorem}
\begin{document}
\title{Covert Single-hop Communication in a Wireless Network with Distributed Artificial Noise Generation}

\author{
   \IEEEauthorblockN{Ramin Soltani\IEEEauthorrefmark{1},
           Boulat Bash\IEEEauthorrefmark{2}, Dennis Goeckel\IEEEauthorrefmark{1}, Saikat Guha\IEEEauthorrefmark{3}, and Don Towsley\IEEEauthorrefmark{2}}

\IEEEauthorblockA{\IEEEauthorrefmark{1}Electrical and Computer Engineering Department, University of Massachusetts, Amherst,
    \{soltani, goeckel\}@ecs.umass.edu}
    \IEEEauthorblockA{\IEEEauthorrefmark{2}School of Computer Science, University of Massachusetts, Amherst,
    \{boulat, towsley\}@cs.umass.edu}
    \IEEEauthorblockA{\IEEEauthorrefmark{3}Raytheon BBN Technologies
        ,\{sguha\}@bbn.com}
        				\thanks{This work has been supported, in part, by the National Science Foundation under grants CNS-1018464 and ECCS-1309573.}
                       \thanks{This work has been presented at the 52nd Annual Allerton Conference on Communication, Control, and Computing, Allerton, Monticello, IL, October 2014. The final version of this work~\cite{soltani2017covert} has been submitted to the IEEE for possible publication.}
                       \thanks{Personal use of this material is permitted. Permission from IEEE must be obtained for all other uses, in any current or future media, including reprinting/republishing this material for advertising or promotional purposes, creating new collective works, for resale or redistribution to servers or lists, or reuse of any copyrighted component of this work in other works. DOI: \href{https://doi.org/10.1109/ALLERTON.2014.7028575}{10.1109/ALLERTON.2014.7028575} }

}

\date{}
\maketitle

\thispagestyle{plain}
\pagestyle{plain}

\begin{abstract}
Covert communication, also known as low probability of detection (LPD) communication, prevents the adversary from knowing that a communication is taking place. Recent work has demonstrated that, in a three-party scenario with a transmitter (Alice), intended recipient (Bob), and adversary (Warden Willie), the maximum number of bits that can be 
transmitted reliably from Alice to Bob without detection by Willie, when additive white Gaussian noise (AWGN) channels exist between all parties, is on the order of the square root of the number of channel uses. In this paper, we begin consideration of network scenarios by studying the case where there are additional ``friendly'' nodes present in the environment that can produce artificial noise to aid in hiding the communication. We establish achievability results by considering constructions where the system node closest to the warden produces 
artificial noise and demonstrate a significant improvement in the throughput achieved covertly, without requiring close coordination between Alice and the noise-generating node. Conversely, under mild restrictions on the communication strategy, we demonstrate no higher covert throughput is possible. Extensions to the consideration of the achievable covert throughput when multiple wardens randomly located in the environment collaborate to attempt detection of the transmitter are also considered.
\end{abstract}
%%%%%%%%%%%%%%%%%%%%%%%%%%%%%%%%%%%%%%%%%%%%%%%%%%%%%%%%%%%%%%%%%%%%%%%%%%%%%%%%%%%%%%%%%%%%%%%%%%%%%%%%%%%

\section{Introduction}

The provisioning of security has emerged as a critical issue in wireless communications to
prevent unauthorized access to the information sent from the transmitter to the desired recipient.
Standard security approaches, whether they are computational (cryptographic) or information-theoretic,
focus on preventing the eavesdropper from obtaining the contents of the message.  However,
it has recently become apparent that a significant threat to users' privacy is not only
the discovery of a message's content, but also the very existence of the message itself
(e.g. the seeking of ``meta-data,'' as detailed in the Snowden disclosures \cite{snowden}).
This motivates the consideration of covert (i.e. low probability of detection) communications.

Historically, covert communication has been of military interest, and spread spectrum approaches
have been widely considered \cite{simon94ssh}.  However, the fundamental limits of covert communication were
only recently established by a subset of the authors \cite{bash_isit2012, bash_jsac2013},
who presented a square root limit on the number of bits that can be transmitted securely from the
transmitter (Alice) to the intended receiver (Bob) when there are additive white Gaussian 
noise (AWGN) channels between Alice and each of Bob and the adversary (Warden Willie).  In particular, by taking 
advantage of the non-zero noise power at Willie, Alice can reliably transmit $\mathcal{O}(\sqrt{n})$ bits 
to Bob over $n$ uses of a channel while lower bounding Willie's error probabilities $\mathbb{P}_{FA}+\mathbb{P}_{MD}\geq 1-\epsilon $ for any $\epsilon>0$ where $\mathbb{P}_{FA}$ is the probability of false alarm and $\mathbb{P}_{MD}$ is the probability of mis-detection. Conversely, if Alice transmits more than $\mathcal{O}(\sqrt{n})$ 
bits over $n$ uses of channel, either Willie detects her with high probability or Bob suffers a non-zero probability of decoding error as $n$ goes to infinity. Covert communications has recently attracted the attention of other researchers \cite{jaggi_isit2013, jaggi_isit2014, hou_kramer_2014} and further work of the authors\cite{bash_isit2013, bash_isit2014}.

In this paper, we turn our attention to the network case, where a collection of 
nodes work to establish covert communication between a collection of source and destination pairs.  
The goal is to establish an analog to the line of work on scalable low probability of intercept
communications \cite{jsac_goeckel, suddu_mobihoc,capar_infocom,capar_ciss}, which considered the 
extension of \cite{gupta_kumar, francheschetti} to the {\em secure} multipair unicast problem in 
large wireless networks.  Here, in analog to \cite{jsac_goeckel}, we consider how security between
Alice and Bob can be improved when there are a number of other nodes present in the environment. Whereas \cite{jsac_goeckel} considered
low probability of intercept (LPI) communications, which allowed pilot signaling for protocol
set-up, the consideration of covert communication is more 
challenging, as we assume that Willie allows no communications from Alice whatsoever.

Consider a wireless network with AWGN channels between Alice and each of Bob and Willie.  The power received at any node is inversely proportional to $d^{\gamma}$, where $d$ is the distance of the receiver from the transmitter and $\gamma$ is the path-loss exponent.  Alice attempts to communicate covertly with Bob without detection by Willie, but also in the presence of other (friendly) network nodes to assist the communication by producing 
background chatter to inhibit Willie's ability to detect Alice's transmission.  We assume the friendly nodes 
are distributed according to a two-dimensional Poisson point process of density 
$m={o}\left(n^{{1}/{\gamma}}\right)$.  Alice and Bob share a secret (codebook) that is unknown to Willie. 
For this scenario, which is described in more detail in Section \ref{prerequisites}, we show in Section
\ref{theorems} that Alice can covertly transmit $\mathcal{O}(m^{\gamma/2} \sqrt{n})$ bits to the receiver Bob, 
who is a unit distance away, over $n$ uses of the channel while keeping Willie's sum of error probabilities 
$ \mathbb{P}_{FA} + \mathbb{P}_{MD} \geq 1- \epsilon $  for any $ \epsilon \geq 0 $, hence demonstrating that the presence of friendly nodes, if sufficiently dense, can significantly improve covert throughput.   Conversely, if Alice attempts to transmit $\omega(m^{\gamma/2} \sqrt{n}) $ bits to Bob over $n$ uses of the channel, either there exists a detector that Willie can use to detect her with arbitrarily low sum of error probabilities $\mathbb{P}_{FA} + \mathbb{P}_{MD}$ or Bob cannot decode the message with arbitrarily low probability of error.  In Section \ref{mult_willies}, the extension to
the case of multiple collaborating Willies located in the field is also presented, which establishes the 
framework for a single transmission on a multi-hop path in a large network. 

\section{Prerequisites}
\label{prerequisites}

\subsection{System Model}

Consider a source Alice wishing to communicate with receiver Bob located at a unit distance away in the presence of adversaries $W_1, W_2, \dots , W_{N_w}$, who are distributed independently and uniformly in the unit square shown in Fig.~\ref{fig:SysMod}  and seek to detect any transmission by Alice. When there is only a single Willie, we omit the subscript and denote it by $W$. Also present are friendly nodes allied with Alice and Bob. These nodes, which are distributed according to a two-dimensional point process with density $m={o}(n^{{1}/{\gamma}})$, where $\gamma$ is the path-loss exponent, are willing to help hide Alice's transmission by generating noise. We assume that the system is able to determine which friendly node is the closest to each Willie. The adversaries try to detect whether Alice is transmitting or not by processing their received signals and applying hypothesis testing on them, as discussed in the next subsection.
We consider three scenarios: single Willie located half way between Alice and Bob, single Willie located randomly and uniformly in the 1 by 1 square shown as a dashed box in Fig.~\ref{fig:SysMod}, and multiple Willies scenario where $N_w$ Willies are located independently and randomly in the unit box. Discrete-time AWGN channels with real-valued symbols are assumed for all channels.  Alice transmits $n$ real-valued symbols $f_1, f_2, ..., f_n$. Each friendly node is either on or off according to the strategy employed. Let $\theta_j$ be one when the $j^{th}$ friendly node is ``on'' (transmits noise) and zero otherwise (silent). If $R_j$ is on, it transmits symbols $\{ f_i^{(j)} \}_{i=1}^{\infty}$, where $\{ f_i^{(j)} \}_{i=1}^{\infty}$ is a collection of independent and identically distributed (i.i.d.) zero-mean Gaussian random variables, each with variance (power) $P_r$.

\begin{figure}
\begin{center}
 \includegraphics[ 
scale=0.6]{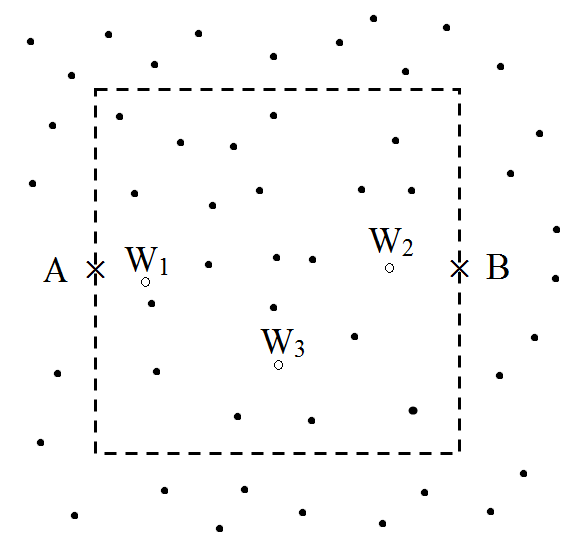}
\end{center}
 \caption{System Configuration: Source node A wishes to communicate reliably and without detection to 
the intended receiver B at distance one (normalized) with the assistance of friendly nodes (represented by solid nodes in the figure) distibuted according to a two-dimensional point process with density $m$  in the presence of adversary nodes  $W_1, W_2, \dots , W_{N_w}$ located in the dashed box ($N_w = 3$ in the figure). }
 \label{fig:SysMod}
 \end{figure}

Bob receives $y_1^{(b)}, y_2^{(b)}, ..., y_n^{(b)} $ where $y_i^{(b)} = f_i+z_i^{(b)}$ for $1\leq i \leq n$. The noise 
component is $z_i^{(b)}= z_i^{(b_0)}+\sum_{j=1}^\infty \theta_j z_j^{(b,r_j)}$, where 
$\{z_i^{(b_0)}\}_{i=1}^{n}$ is an i.i.d 
sequence representing the background noise of Bob's receiver with $z_i^{b_0} \sim \mathcal{N}(0,\sigma_{b_0}^2)$ for all $i$, and $\{z_i^{(b,r_j)}\}_{i=1}^{n}$ is an i.i.d. sequence of received noise samples caused by chatter from the $j^{th}$ friendly node when it is ``on'', 
with $\mathbb{E}[|z_j^{(b,r_j)}|^2]=\frac{P_r}{d_{b,r_j}^\gamma}$, where $d_{x,y}$ is the distance between node
$x$ and node $y$.

% Single Willie Sys Model%
\begin{comment}
In addition, Willie observes  $y_1^{(w)}, y_2^{(w)}, ..., y_n^{(w)}$ where  $y_i^{(w)} = f_i+z_i^{(w)}$. Also, $z_i^{(w)}= z_i^{(w_0)}+\sum_{j=1}^m \theta_j z_j^{w_{r_j}}$ where $z_i^{w_0} \sim  \mathcal{N}  (0,\sigma_{w_0}^2)$ is the i.i.d AWGN of the channel, $z_j^{(w,r_j)}$ is the noise generated by $j^{th}$ friendly node when it is on with variance $\frac{P_r}{d_{(w,r_j)}^\gamma}$. 
\end{comment}

% Multiple Willie SyS Model%

Similarly, the $k^{th}$ Willie ($1 \leq k \leq N_w$) observes  $y_1^{(w_k)}, y_2^{(w_k)}, ..., y_n^{(w_k)}$ where  $y_i^{(w_k)} = f_i+z_i^{(w_k)}$. Here, $z_i^{(w_k)}= z_i^{({w_k}_0)}+\sum_{j=1}^\infty \theta_j z_j^{({w_k},r_j)}$ where $\{z_i^{({w_k}_0)}\}_{i=1}^{n}$ is an i.i.d sequence representing the background noise at Willie's receiver, where $z_i^{({w_k}_0)} \sim \mathcal{N}  (0,\sigma_{{w_k},0}^2)$ for all $i$, and $\{z_i^{({w_k},r_j)}\}_{i=1}^{n}$ is the i.i.d sequence of received noise samples caused by chatter from the $j^{th}$ friendly node when it is "on" with variance $\frac{P_r}{d_{{w_k},r_j}^\gamma}$. 

Note that we assume Alice and the friendly nodes, while having a common goal, are not able to closely align their transmissions; that is, the friendly nodes set up a constant power background chatter but are not able to, for example, lower their power at the time Alice transmits.

\subsection{Hypothesis Testing}

Consider the case of a single Willie. We assume he applies a hypothesis test to his received signal to determine whether or not Alice is communicating with Bob. This test is performed according to Willie's knowledge about his channel to Alice. When Alice is not transmitting, Willie expects to observe Gaussian white noise along with transmissions from other nodes and, when Alice is transmitting, he expects to observe a signal with greater power. We denote the probability distribution of Willie's collection of observations $\{y_i^{({w}_0)}\}_{i=1}^{n}$ by $\mathbb{P}_1$ when Alice is communicating with Bob, and the distribution of the observations when she is not transmitting by $\mathbb{P}_0$.

There are two hypotheses, $H_0$ and $H_1$. The null hypothesis ($H_0$) corresponds to the case that Alice is not transmitting, and the alternative hypothesis $H_1$ corresponds to the case that Alice is transmitting. We denote by $\mathbb{P}_{FA}$ as the probability of rejecting $H_0$ when it is true (type I error or false alarm), and $\mathbb{P}_{MD}$ as the probability of rejecting $H_1$ when it is true (type II error or mis-detection). We assume that Willie uses classical hypothesis testing with equal prior probabilities and seeks to minimize $\mathbb{P}_{FA} + \mathbb{P}_{MD}$; the generalization to arbitrary prior probabilities is straightforward~\cite{bash_jsac2013}. For a scenario with multiple collaborating Willies (Theorem 3), the received signals are processed together at a server to arrive at a single collective decision as to whether Alice is transmitting or not. 

\subsection{Reliability and Covertness}

We define Alice's transmission as reliable if and only if the desired receiver (Bob) can decode her message with arbitrarily low average probability of error $\mathbb{P}_e$ at long block lengths, where the average is over the node locations. In other words, for any $\zeta>0$, Bob can achieve $\mathbb{P}_e < \zeta$ as $n \to \infty$.

Alice's transmission is covert if and only if she can lower bound Willie's (or Willies', for scenarios with multiple adversary nodes) average sum of probabilities  of error ($\mathbb{E}\left[\mathbb{P}_{FA}+\mathbb{P}_{MD}\right]$) by $1-\epsilon$ for any $\epsilon>0$, as $n \to \infty$ ~\cite{bash_jsac2013}.

\section{Covert Communication in the Presence a Single Warden and $m$ Friendly Nodes}

\label{theorems}

In this section, we first consider the case where there is only one Willie located half-way 
between Alice and Bob. To hide the presence of Alice's transmission, we turn on the friendly node closest to Willie and then analyze Willie's ability to detect Alice's transmission. This allows us to derive a restriction on Alice's power required to maintain covertness.  The achievability proof concludes by considering the rate at
which reliable decoding is still possible under this restriction on Alice's power level.  A converse under mild restrictions on the signaling scheme is also provided. After considering the case where Willie is located half-way between Alice and Bob in Theorem 1, we analyze the problem of a single Willie located randomly and uniformly in the 1 by 1 square shown as a dashed box in Fig.~\ref{fig:SysMod}.

% \subsection{Single Willie located at the center}

\begin{thm}

When friendly nodes are distributed such that $m={o}\left(n^{{1}/{\gamma}}\right)$ and $m=\omega{(1)}$ and there is one warden (Willie) located half-way between Alice and Bob, Alice can reliably and covertly transmit $  \mathcal{O}(m^{\gamma/2}\sqrt{n}) $ bits to Bob over $n$ uses of the channel.

Conversely, if Alice attempts to transmit $\omega(m^{\gamma/2} \sqrt{n}) $ bits to Bob over $n$ uses of channel, either there exists a detector that Willie can use to detect her with arbitrarily low sum of error probabilities $\mathbb{P}_{FA} + \mathbb{P}_{MD}$ or Bob cannot decode the message with arbitrarily low probability of error.

\end{thm}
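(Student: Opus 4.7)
The plan is to reduce to the single-Willie AWGN square-root law of \cite{bash_jsac2013} by boosting Willie's effective noise variance. First, I would fix the friendly-node protocol: activate only $R^*$, the friendly node closest to Willie, and keep all other friendly nodes silent. Using standard Poisson-point-process arguments (nearest-neighbor distribution in the plane), the distance $d_{w,R^*}$ concentrates at $\Theta(m^{-1/2})$ in the sense that $d_{w,R^*}\le \alpha m^{-1/2}$ with probability tending to $1$ as $m\to\infty$ for a suitable constant $\alpha$. Since Willie sits at distance $1/2$ from Bob, the triangle inequality forces $d_{b,R^*}\in[1/2-\alpha m^{-1/2},\,1/2+\alpha m^{-1/2}]$, so the interference produced at Bob by $R^*$ is at most a constant $c_b=P_r/(1/2-\alpha m^{-1/2})^\gamma=\Theta(1)$, whereas the interference produced at Willie is at least $P_r/d_{w,R^*}^{\gamma}=\Omega(m^{\gamma/2})$. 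Thus, conditioned on the high-probability geometric event, Willie's total noise variance satisfies $\sigma_{w}^2=\sigma_{w_0}^2+\Theta(P_r m^{\gamma/2})$ while Bob's satisfies $\sigma_b^2=\sigma_{b_0}^2+\Theta(P_r)=\Theta(1)$.

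\paragraph{Achievability --- covertness and reliability.} Alice draws her codebook i.i.d.\ $\mathcal{N}(0,P_f)$ with per-symbol power $P_f = c\,\epsilon\, m^{\gamma/2}/\sqrt{n}$ for a small constant $c$, and shares the codebook with Bob as the pre-shared secret. To bound Willie's detection error, I would compute the Kullback--Leibler divergence between his observation distributions $\mathbb{P}_0$ (Alice silent) and $\mathbb{P}_1$ (Alice transmitting Gaussian noise) on his $n$ samples; because both are zero-mean Gaussian with variances $\sigma_w^2$ and $\sigma_w^2+P_f$, standard calculation gives $\mathcal{D}(\mathbb{P}_0\|\mathbb{P}_1)=\tfrac{n}{2}\bigl(\tfrac{P_f}{\sigma_w^2}-\ln(1+\tfrac{P_f}{\sigma_w^2})\bigr)\le \tfrac{n P_f^2}{2\sigma_w^4}$, which by our choice of $P_f$ is $O(c^2\epsilon^2)$. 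Pinsker's inequality then yields $\mathbb{P}_{FA}+\mathbb{P}_{MD}\ge 1-\sqrt{\mathcal{D}/2}\ge 1-\epsilon$ for $c$ small, establishing covertness (the event where the PPP geometry fails contributes $o(1)$ to the expected sum-error). For reliability I would invoke the Shannon--Gaussian capacity theorem on Bob's channel with SNR $P_f/\sigma_b^2$: since $\ln(1+P_f/\sigma_b^2)\ge P_f/(2\sigma_b^2)$ for small $P_f$, a random-coding argument transmits $\Theta(nP_f)=\Theta(\epsilon\,m^{\gamma/2}\sqrt{n})$ bits with $\mathbb{P}_e\to 0$.

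\paragraph{Converse.} Under the mild restriction that Alice employs i.i.d.\ signaling and the friendly nodes use a strategy independent of Alice's message (as in the statement's informal restriction to uncoordinated chatter), the upper bound follows by reversing the achievability argument. The key step is to upper bound Willie's aggregate interference plus background noise by $O(m^{\gamma/2})$ with high probability; the dominant contribution is again from the nearest friendly node at distance $\Omega(m^{-1/2})$, and sums over farther nodes contribute lower-order terms after a truncation/Campbell's-theorem argument on the PPP (using $m=o(n^{1/\gamma})$ to control tail effects). With $\sigma_w^2=O(m^{\gamma/2})$, the usual covertness-to-KL bound (Bash \emph{et al.}, \cite{bash_jsac2013}) forces $P_f=O(\sigma_w^2/\sqrt{n})=O(m^{\gamma/2}/\sqrt{n})$; since $\sigma_b^2\ge \sigma_{b_0}^2>0$, Bob's mutual information per channel use is $O(P_f)$, giving at most $O(m^{\gamma/2}\sqrt{n})$ reliably-decoded bits. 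Any attempt to transmit $\omega(m^{\gamma/2}\sqrt{n})$ bits must therefore violate either the KL bound (so Willie can detect) or Fano's inequality at Bob.

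\paragraph{Main obstacle.} The delicate piece is handling the randomness of the PPP cleanly in both directions simultaneously. In the achievability I must show that the ``bad'' geometric events (Willie anomalously far from the nearest friendly node, or $R^*$ anomalously close to Bob) contribute only $o(1)$ to the \emph{expected} sum-error and to $\mathbb{P}_e$; this is where the assumption $m=\omega(1)$ enters, since the nearest-neighbor distance concentrates only as the density grows. In the converse, controlling the aggregate interference from all friendly nodes---ensuring that no conspiracy of closely-packed nodes near Willie but far from Bob lets Alice exceed the $m^{\gamma/2}\sqrt{n}$ threshold---requires a careful PPP tail estimate together with the assumption $m=o(n^{1/\gamma})$ to keep the path-loss-weighted sum bounded at the right order.
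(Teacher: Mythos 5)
Your achievability argument follows the paper's construction essentially verbatim (activate only the friendly node nearest Willie, Gaussian codebook with $P_f=\Theta(m^{\gamma/2}/\sqrt{n})$, KL--Pinsker for covertness, random coding for Bob), but one claim in your geometric setup is false as stated: for the nearest point of a rate-$m$ planar Poisson process, $\mathbb{P}(d_{w,R^*}>\alpha m^{-1/2})=e^{-\pi\alpha^2}$, a \emph{constant} that does not vanish as $m\to\infty$ for fixed $\alpha$. So the ``bad'' event where Willie's noise boost fails contributes a constant, not $o(1)$, to the expected sum-error. This is repairable---either choose $\alpha=\alpha(\epsilon)$ so that $e^{-\pi\alpha^2}\le\epsilon/2$ and absorb it into the covertness budget, or do what the paper does and avoid conditioning altogether by integrating against the exact nearest-neighbor density, $\mathbb{E}_R[1/\sigma_w^2]\le P_r^{-1}\mathbb{E}_R[d_{r,w}^{\gamma}]=\Gamma(\gamma/2+1)/(2P_r\pi^{\gamma/2}m^{\gamma/2})$, which handles all realizations in one line. (Your Bob-side event $d_{r,b}\le 1/4$ is fine, since its probability is at most $e^{-\pi m/16}\to 0$.) Your appeal to ``the Shannon--Gaussian capacity theorem'' also needs the explicit finite-blocklength random-coding bound $\mathbb{P}_e\le 2^{nR-\frac{n}{2}\log_2(1+P_f/2\sigma_b^2)}$, because the SNR vanishes with $n$ and a fixed-channel asymptotic statement does not apply.

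The genuine gap is in your converse. You write that ``the usual covertness-to-KL bound forces $P_f=O(\sigma_w^2/\sqrt{n})$'' and that exceeding it ``violates the KL bound (so Willie can detect).'' Pinsker's inequality only gives $\mathcal{V}_T(\mathbb{P}_0,\mathbb{P}_1)\le\sqrt{\mathcal{D}/2}$; a large KL divergence does \emph{not} imply a large total variation distance, so it cannot certify that any detector succeeds. To prove the converse you must exhibit a detector and upper-bound its error probabilities. The paper does this with a radiometer: Willie forms $S=\frac{1}{n}\sum_i(y_i^{(w)})^2$, thresholds it, and Chebyshev's inequality (applied after conditioning on $d_{r,w}$ lying outside small balls of radius $\eta_1,\eta_2=\Theta(m^{-1/2})$, chosen so the conditioning events cost at most a prescribed fraction of $\lambda$) shows $\mathbb{E}_R[\mathbb{P}_{FA}+\mathbb{P}_{MD}]<\lambda$ whenever the codeword power is $\omega(m^{\gamma/2}/\sqrt{n})$. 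Only then does the Fano-type bound on Bob's error probability for the remaining low-power codewords close the argument; that last step of yours is in the right spirit. Without the explicit detector analysis, your converse does not establish the claimed dichotomy.
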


\begin{proof}

{\it (Achievability)} 

\textbf{Construction}: To establish secret communication, Alice and Bob share a codebook that is not revealed to Willie. For each message transmission of length $M$, Alice uses a new codebook to encode the message into a codeword of length $n$ at the rate of $R=\frac{M}{n}$. To build a codebook, random coding arguments are used; that is, codewords $\{C(W_l)\}_{l=1}^{l=2^{nR}}$ are associated with messages $\{W_l\}_{l=1}^{l=2^{nR}}$, where each of the codewords $C(W_l)=\{C^{(u)}(W_l)\}_{u=1}^{u=n}$ includes random symbols $C^{(u)}(W_l) \sim \mathcal{N}(0,P_f)$ where $P_f$ is defined later. At the receiver, Bob employs a maximum-likelihood (ML) decoder to process his received signal. 

To establish a covert communication, Alice and Bob's strategy is to turn on the closest friendly node to Willie and keep all other friendly nodes off, whether Alice is transmitting or not.  Therefore, Willie's observed noise power is given by
\begin{equation}
{\sigma_w^2} = {\sigma_{w_0}^2} + \frac{P_r}{d_{r,w}^{\gamma}} ,
\end{equation}

\noindent where $\sigma_{w_0}^2$ is Willie's noise power when none of the friendly nodes are transmitting and $d_{r,w}$ is the (random) distance of the closest friendly node to Willie; hence, ${\sigma_w^2}$ is a random variable which depends on the locations of the friendly nodes.

\textbf{Analysis}:  When Willie applies the optimal hypothesis test \cite{bash_jsac2013}:
\begin{equation} \label{eq:th1e1} \mathbb{E}_{R}[\mathbb{P}_{FA}+\mathbb{P}_{MD}] \geq 1- \mathbb{E}_{R} \left[    \sqrt{\frac{n}{2} \mathcal{D}(\mathbb{P}_w || \mathbb{P}_s)} \; \right]   
\end{equation}

\noindent where  $\mathbb{E}_R[.]$ denotes the expected value over all possible locations of the friendly nodes,  $\mathcal{D}(\mathbb{P}_w || \mathbb{P}_s)$ is the relative entropy between  $\mathbb{P}_w$ and  $\mathbb{P}_s$, $ \mathbb{P}_w = \mathcal{N}(0,\sigma_w^2) $ is the probability distribution function (pdf) for each of Willie's observations $z_i^{({w}_0)}$ when Alice is not transmitting and $ \mathbb{P}_s = \mathcal{N}(0,\sigma_w^2+\frac{P_f}{d_{w,a}^{\gamma}}) $ is the pdf for each of the corresponding observations when Alice is transmitting.

We next show how Alice can lower bound the sum of average error probabilities by upper bounding  $\mathbb{E}_{R} \left[    \sqrt{\frac{n}{2} \mathcal{D}(\mathbb{P}_w || \mathbb{P}_s)} \; \right] $. For the given $ \mathbb{P}_w$ and  $ \mathbb{P}_s$ we calculate the relative entropy ~\cite{bash_jsac2013}:
\begin{align}\label{eq:thm1d}
\nonumber \mathcal{D}(\mathbb{P}_w || \mathbb{P}_s) & = \int_x p_0(x) \, \ln \frac{p_0(x)}{p_1(x)} dx   \\
&= \frac{1}{2} \left[ \ln \left(1+ \frac{P_f}{d_{w,a}^{\gamma} \sigma_w^2} \right) -\left(1+ \left( \frac{P_f}{d_{w,a}^{\gamma} \sigma_w^2} \right)^{-1} \right)^{-1}  \right].
\end{align}
Suppose Alice sets her average symbol power $P_f \leq \frac{c m^{\gamma/2}}{\sqrt{n}} $ where $c$ is a constant defined later. Since $m$ is ${o}\left({n}^{{1}/{\gamma}}\right)$, for $n$ large enough $P_f \leq 2\sigma_{w_0}^2 d_{w,a}^{\gamma}< 2\sigma_w^2 d_{w,a}^{\gamma}$. Then, using the Taylor series expansion at $P_f=0$ yields 
\begin{equation}\label{eq:thm1d2}
\mathcal{D}(\mathbb{P}_w || \mathbb{P}_s) <  \left(\frac{P_f}{2 d_{w,a}^{\gamma}\sigma_w^2}\right)^2.
\end{equation}

\noindent Since $d_{w,a}=\frac{1}{2} $,
\begin{align}
\nonumber \mathbb{E}_{R} \left[    \sqrt{\frac{n}{2} \mathcal{D}(\mathbb{P}_w || \mathbb{P}_s)} \; \right]   &< {2^{\gamma-1} P_f} \sqrt{\frac{n}{2}} \mathbb{E}_{R} \left[\frac{1}{\sigma_w^2}\right]  \\
&\leq \frac{2^{\gamma-1}}{\sqrt{2}} c m^{\gamma/2}   \mathbb{E}_{R} \left[\frac{1}{\sigma_w^2}\right].
\end{align}

\begin{comment}
The cumulative distribution function (cdf) of $d_{r,w}$ is given by

\begin{align}
\nonumber \mathbb{F}_{d_{r,w}}\left(x\right) = 1-e^{-m \pi x^2},
\end{align}

\end{comment}

%shortenned of this:
\begin{comment}
\begin{align}
\nonumber \mathbb{F}_{d_{r,w}}\left(x\right) &=\mathbb{P}(\min(d_{r_i ,w}) \leq x)\\ & = 1-\mathbb{P}(\min(d_{r_i ,w}) \geq x)\nonumber\\
& \nonumber = 1-\mathbb{P}\left(\bigcap_{i=1}^{m}{\{d_{r_i ,w} \geq x\}} \right)   \\
&= 1-e^{-m \pi x^2},
\end{align}
\end{comment}

\noindent Due to the Poisson assumption, the pdf of $d_{r,w}$ is easily obtained as
\begin{equation}
\label{eq:p1}
{f}_{d_{r,w}}\left( x \right) = 2 m \pi x \, e^{-m \pi x^2 }.
\end{equation}

\noindent Therefore,
\begin{align}\label{eq:totalex1}
\nonumber m^{\gamma/2} \mathbb{E}_{R} \left[\frac{1}{\sigma_w^2}\right]  &=  m^{\gamma/2} \mathbb{E}_{R}\left[\frac{1}{\sigma_{w_0}^2+{P_r}/{d_{r,w}^\gamma}}\right] \\
\nonumber & \leq \frac{m^{\gamma/2}}{P_r} \mathbb{E}_{R}\left[d_{r,w}^\gamma \right]\\
\nonumber & = \frac{2 m^{\gamma/2+1} \pi}{P_r} \int_{x=0}^{x=\infty} x^{\gamma+1} e^{-m \pi x^2 } dx\\
  &= \frac{\Gamma \left(\gamma/2+1\right)}{2 P_r \pi^{\gamma/2+1}} ,
\end{align}

\noindent where $\Gamma(.)$ is the Gamma function. If Alice sets $c \leq \frac{\epsilon \sqrt{2}}{2^{\gamma-1}} \left(\frac{\Gamma \left(\gamma/2+1\right)}{2 P_r \pi^{\gamma/2+1}} \right)^{-1}$, she can achieve $\mathbb{E}_{R} \left[    \sqrt{\frac{n}{2} \mathcal{D}(\mathbb{P}_w || \mathbb{P}_s)} \; \right] <\epsilon$. Thus, with $P_f \leq  \frac{c m^{\gamma/2}}{\sqrt{n}}$, Alice can covertly transmit to Bob. Note that Alice does not use the locations of the friendly nodes to select the transmission power (and thus, per below, the corresponding rate). Rather, she can choose a power and corresponding rate that is covert when averaged over the locations of the friendly nodes.

Now, we analyze Bob's decoding error probability averaged over all possible codewords and locations of friendly nodes. For Bob's ML decoder, the decoding error probability averaged over all possible codewords conditioned on $\sigma_b^2=\sigma_{b_0}^2+\frac{P_r}{d_{r,b}^{\gamma}}$ where $d_{r,b}$ is the distance from Bob to the relay closest to Willie, is upper bounded using (5)-(9) in \cite{bash_jsac2013}:
\begin{align}
\nonumber \label{eq:Pebasic}   \mathbb{P}_e \left(\sigma_b^2\right)  & \leq 2^{nR- \frac{n}{2} \log_2 \left( 1+ \frac{ P_f}{2 \sigma_b^2} \right) }\\
 &=2^{nR- \frac{n}{2} \log_2 \left( 1+ \frac{ c  m^{\gamma/2} }{2 \sqrt{n} \sigma_b^2} \right) }.
 \end{align}
  
 If the rate is set to $R= \frac{\rho}{2} \log_2 \left( 1+ \frac{ c  m^{\gamma/2} }{ 2 \sqrt{n} \left(\sigma_{b_0}^2+{4^{\gamma} P_r }\right)} \right) $, $0 < \rho < 1$,
  \begin{align}\label{eq:th1bobpe}
    \nonumber \mathbb{P}_e \left(\left. \sigma_b^2\right|d_{r,b}>\frac{1}{4}\right)& \leq 2^{-(1-\rho ) \frac{n}{2} \log_2 \left( 1+ \frac{c  m^{\gamma/2} }{ 2 \sqrt{n} \left(\sigma_{b_0}^2+{4^{\gamma} P_r }\right)} \right) }\\
  &  \nonumber =  \left( 1+ \frac{c   m^{\gamma/2} }{2 \sqrt{n} \left(\sigma_{b_0}^2+{4^{\gamma} P_r}\right)} \right)^{-(1-\rho) \frac{n}{2}} \\
  &  \leq  \left({1+\frac{c  m^{\gamma/2} \sqrt{n} (1-\rho)}{4 \left(\sigma_{b_0}^2+{4^{\gamma} P_r}\right)}}\right)^{-1}.
  \end{align}
  
\noindent  where \eqref{eq:th1bobpe} is due to $(1+x)^r \leq \left({1-rx}\right)^{-1}$ for any $r<0$ and $x>0$. The expected value of $\mathbb{P}_e \left(\sigma_b^2\right)$ over all possible values of the distance of the closest friendly node to Willie is:
\begin{align}\label{eq:EPEBob}
\nonumber \mathbb{P}_e= \mathbb{E}_{R}\left[\mathbb{P}_e \left(\sigma_b^2\right)\right]& =\mathbb{E}_{R}\left[\left. \mathbb{P}_e \left(\sigma_b^2\right) \right|d_{r,b}\leq  \frac{1}{4} \right] \mathbb{P}(d_{r,b}\leq  \frac{1}{4})\\
 &\phantom {=} + \mathbb{E}_{R}\left[\left.\mathbb{P}_e \left(\sigma_b^2\right)\right|d_{r,b}>  \frac{1}{4}\right] \mathbb{P}(d_{r,b}>  \frac{1}{4})
\end{align}
\noindent Consider
\begin{align}\label{eq:EPEBob1}
\nonumber \mathbb{E}_{R}\left[\left. \mathbb{P}_e \left(\sigma_b^2\right)\right|d_{r,b}\leq  \frac{1}{4}\right] \mathbb{P}\left(d_{r,b}\leq  \frac{1}{4}\right)
&\nonumber  \leq  \mathbb{P}\left(d_{r,b}\leq  \frac{1}{4}\right)\\
&\nonumber  \leq \mathbb{P}\left(d_{r,w}> \frac{1}{4}\right)\\
& = e^{-\pi m \left(\frac{1}{4}\right)^2}
\end{align}

\noindent Next, consider the term in~\eqref{eq:EPEBob}:
\begin{align} \label{eq:frstpart}
 \nonumber &\mathbb{E}_{R}\left[\left.\mathbb{P}_e\left(\sigma_b^2\right)\right|d_{rb}>  \frac{1}{4}\right] \mathbb{P}\left(d_{rb}>  \frac{1}{4}\right) \\& \leq \mathbb{E}_{R}\left[\left.\mathbb{P}_e\left(\sigma_b^2\right)\right|d_{rb}>  \frac{1}{4}\right]\\ 
&\nonumber \leq \mathbb{E}_{R}\left[\left.\mathbb{P}_e\left(\sigma_b^2\right)\right|d_{rb}=  \frac{1}{4}\right] \\
 & =  \left({1+{ \frac{c (1-\rho ) m^{\gamma/2} \sqrt{n}}{4 \left(\sigma_{b_0}^2 + {4^{\gamma} P_r}\right)}}}\right)^{-1}.
     \end{align}
    
\noindent Thus, by Eqs.~\eqref{eq:EPEBob},~\eqref{eq:EPEBob1},~\eqref{eq:frstpart},  $\lim\limits_{m \to \infty} \mathbb{P}_e =0$ and, for any $0< \zeta < 1 $, $\mathbb{P}_e < \zeta$.

Now, we calculate the average number of bits that Bob can receive. Since $m$ is ${o}\left({n}^{{1}/{\gamma}}\right)$, for $n$ large enough $\frac{c m^{\gamma/2}}{2 \sqrt{n}} < 2\sigma_b^2$. Based on the fact that for any $0<x<1$, $\log_2{(1+x)} \geq {x}$
% upper bound
\begin{comment}
Since $\ln(1+x)\leq {x}$ for any $x$

\begin{align}
\label{eq:thm1bobp25}\mathbb{E}_{R} \nonumber\left[ nR \right]& =
\mathbb{E}_{R} \nonumber\left[ n \frac{\rho}{2 } \ln \left( 1+ \frac{ c   m^{\gamma/2} }{ 2 \sqrt{n} \sigma_b^2} \right) \frac{1}{\ln{2}}\right]\\
\nonumber &\leq  \mathbb{E}_{R} \left[\frac{\sqrt{n} \rho c  m^{\gamma/2} }{4 \sigma_b^2 \ln 2}\right] \\
& \nonumber =  \frac{\sqrt{n} \rho c  m^{\gamma/2} }{ 4  \ln{2} } \mathbb{E}_{R} \left[\frac{1}{\sigma_b^2 }\right]\\
& \leq \frac{\sqrt{n} \rho c m^{\gamma/2} }{ 4 \ln {2}} \frac{1}{ \sigma_{b_0}^2 }.
\end{align}

On the other hand
\end{comment}
\begin{align}
\label{eq:thm1bobp3} nR & \geq \frac{\sqrt{n} \rho c  m^{\gamma/2} }{4 \left(\sigma_{b_0}^2 + {4^{\gamma} P_r}\right)}.
\end{align}
\noindent Thus, Bob receives $\mathcal{O}(m^{\gamma/2} \sqrt{n})$ bits in $n$ channel uses.
\begin{comment}
Now, we calculate the average number of bits that Bob can receive. 
Since for any $\kappa>0$ $ \lim\limits_{m \to \infty} \mathbb{P}\left(\left|d_{{rw}}-0\right|\geq \kappa\right) = \lim\limits_{m \to \infty} e^{-\pi m\kappa^2}=0$, $d_{rw} \xrightarrow{p} 0$. On the other hand, $\frac{1}{2}-d_{rw}\leq d_{rb} \leq \frac{1}{2}+d_{rw}$ yields $d_{rb} \xrightarrow{p} \frac{1}{2}$ and $\sigma_b^2 =\left( \sigma_{b_0}^2+ \frac{P_r}{d_{rb}^{\gamma}}\right)\xrightarrow{p} \sigma_{b_0}^2+ 2^{\gamma} P_r$. Therefore

\begin{align}
\nonumber nR &\xrightarrow{p}  \frac{n\rho}{2} \log_2 \left( 1+ \frac{c \sqrt{2} m^{\gamma/2} }{ \sqrt{n} \left(\sigma_{b_0}^2+ 2^{\gamma} P_r\right)} \right)\\
 \nonumber &\leq  \frac{n \rho}{2} \left( \frac{c f(n) m }{2 \sqrt{n} \left(\sigma_{b_0}^2+ 4 P_r\right)} \right)\\
 \label{eq:th1bobbit}&= \frac{ \rho c\left(  \sqrt{n} f(n) m  \right)}{4\left(\sigma_{b_0}^2+ 4 P_r\right)} 
\end{align}

\noindent where the last step in\eqref{eq:th1bobbit} is due to the fact that  $\log_2(1+x)<x$ for $x>0$. Thus, Bob receives $\mathcal{O}(m\sqrt{n})$ bits in $n$ channel uses.

\end{comment}

{\it (Converse)} Suppose Willie uses a power detector on his collection of observations $\left\{y_i^{(w)}\right\}_{i=1}^{n}$ to form $S=\frac{1}{n} \sum_{i=1}^{n}\left(y_i^{(w)}\right)^2$ and performs a hypothesis test based on the comparison of $S$ to a threshold $t$. When $H_0$ is true \cite{bash_jsac2013}
\begin{align}
\label{eq:th1con1}\mathbb{E}[S]&=\sigma_w^2\\
\label{eq:th1con2}\mathrm{Var}[S]&=\frac{2\sigma_w^4}{n}
\end{align}

\noindent When $H_1$ is true
\begin{align}
\label{eq:th1con3}\mathbb{E}[S]&=\sigma_w^2+P_k\\
\label{eq:th1con4}\mathrm{Var}[S]&=\frac{4 P_k \sigma_w^2 + 2\sigma_w^4}{n}
\end{align}

\noindent where $P_k$ is the power of the codeword sent by Alice. If $S<\sigma_w^2+t$, Willie accepts $H_0$; otherwise, he accepts $H_1$.  Bounding $\mathbb{P}_{FA}$ by using Chebyshev's inequality yields \cite{bash_jsac2013}:
\begin{align}
\mathbb{P}_{FA} \leq \frac{2 \sigma_w^4}{n t^2}
\end{align}

\noindent Therefore
\begin{align}
\nonumber \mathbb{E}_R [\mathbb{P}_{FA}] &= \mathbb{E}_R  \left[\left.\mathbb{P}_{FA}\right|d_{r,w}\leq \eta_1 \right] \mathbb{P}(d_{r,w}\leq \eta_1 )\\
  \nonumber  &\phantom{=}+ \mathbb{E}_R \left[\left.\mathbb{P}_{FA}\right|d_{r,w}>\eta_1 \right] \mathbb{P}(d_{r,w}>\eta_1 )\\
 \nonumber &\leq \mathbb{P}(d_{r,w}\leq\eta_1 ) +  \mathbb{E}_R \left[\left.\mathbb{P}_{FA}\right|d_{r,w}>\eta_1 \right]\\
 &\leq  \left(1-e^{- m \pi\eta_1^2}\right) + \frac{2 \left(\sigma_{w_0}^2 + \frac{P_r}{\eta_1^\gamma}\right)^2}{n t^2}
\end{align}

\noindent $\forall \eta_1>0$. Let Willie choose threshold $t = \frac{2\sqrt{2}}{\sqrt{n \lambda}} \left(\sigma_{w_0}^2 + \frac{ P_r}{\eta_1^{\gamma}}\right)$ where $\eta_1 < \sqrt{\frac{\ln {\left(\frac{4}{4-\lambda}\right)}}{m \pi}}$. Then
\begin{align}
 \mathbb{E}_R [\mathbb{P}_{FA}] <  \left(1-\left(1-\frac{\pi}{4}\right)\right) + \frac{2 n \lambda}{8 n} = \frac{\lambda}{2}
\end{align}
\noindent In addition, Willie can upper bound $\mathbb{P}_{MD}$ by (16) in~\cite{bash_jsac2013}
\begin{align} \label{eq:th1conv1}
\mathbb{P}_{MD} \leq \frac{4 P_k \sigma_w^2 + 2 \sigma_w^4}{n \left(P_k-t\right)^2}
\end{align}

\noindent Therefore
\begin{align}
\nonumber \mathbb{E}_R [\mathbb{P}_{MD}] &= \mathbb{E}_R \left[\left.\mathbb{P}_{MD}\right|d_{rw}\leq\eta_2 \right] \mathbb{P}(d_{rw}\leq\eta_2 )\\
  \nonumber  &\phantom{=}+ \mathbb{E}_R \left[\left.\mathbb{P}_{MD}\right|d_{rw}>\eta_2 \right] \mathbb{P}(d_{rw}>\eta_2 )\\
 \nonumber &\leq \mathbb{P}(d_{rw}\leq\eta_2 ) +  \mathbb{E}_R \left[\left.\mathbb{P}_{MD}\right|d_{rw}>\eta_2 \right]\\
 &\nonumber \leq  \left(1-e^{- m \pi\eta_2^2}\right) + \frac{4 P_k \left(\sigma_{w_0}^2 + \frac{P_r}{\eta_2^{{\gamma}}}\right)}{n \left(P_k-t\right)^2}\\
 & \phantom{=} +  \frac{2 \left(\sigma_{w_0}^2 + \frac{P_r}{\eta_2^{\gamma}}\right)^2}{n \left(P_k-t\right)^2}
 \end{align}

\noindent $\forall \eta_2>0$. We now set $\eta_2 = \sqrt{\frac{\ln {\left(\frac{2}{2-\lambda+\lambda^{\prime}}\right)}}{m \pi}}$, where $0<\lambda^{\prime}<\lambda$. Since $m$ is ${o}({n}^{{1}/{\gamma}})$ and $t$ is $\Theta\left(\frac{m^{\gamma/2}}{\sqrt{n}}\right)$, if Alice sets her average symbol power $P_k = \omega\left(\frac{m^{\gamma/2}}{\sqrt{n}}\right)$, then there exists $n_0>0$ s.t. $\forall n>n_0(\lambda^{\prime})$
\begin{align}\nonumber \mathbb{E}_R [\mathbb{P}_{MD}] \leq \frac{\lambda-\lambda^{\prime}}{2} + \frac{\lambda^{\prime}}{2} = \frac{\lambda}{2}
\end{align}

\begin{comment}
\begin{align}
\lim\limits_{m,n \to \infty} \frac{4 P_k \left(\sigma_{w_0}^2 + \frac{P_r}{\eta_2^\gamma}\right)}{n \left(P_k-t\right)^2} +  \frac{2 \left(\sigma_{w_0}^2 + \frac{P_r}{\eta_2^\gamma}\right)^2}{n \left(P_k-t\right)^2} = 0
\end{align}
\end{comment}

 \noindent Therefore $\frac{\lambda}{2}$ and $\mathbb{P}_{FA} + \mathbb{P}_{MD} < \lambda$ for any $\lambda>0$.
 
Thus, to avoid detection for a given codeword, Alice must set the power of that codeword to $P_{\mathcal{U}} = O\left(\frac{m^{\gamma/2}}{\sqrt{n}}\right)$. Suppose that Alice's codebook contains a fraction $\xi>0$ of codewords with power $P_{\mathcal{U}} = O\left(\frac{m^{\gamma/2}}{\sqrt{n}}\right)$. Bob's decoding error probability of such low power codewords is lower bounded by (Eq. (20) in \cite{bash_jsac2013})
\begin{align} \label{eq:th1con5}
\mathbb{P}_e^{\mathcal{U}} \geq 1- \frac{\frac{P_{\mathcal{U}}}{2\sigma_b^2}+\frac{1}{n}}{\frac{\log_2 \xi}{n}+R}
\end{align}

\noindent Since Alice's rate is $R=\omega \left(\frac{m^{\gamma/2}}{\sqrt{n}}\right)$ bits/symbol, $\lim\limits_{n,m \to \infty} \mathbb{P}_e^{\mathcal{U}}$ is bounded away from zero. 
\end{proof}
\begin{thm}
When friendly nodes are distributed such that $m={o}\left(n^{{1}/{\gamma}}\right)$ and $m={w}\left(1\right)$, and there is just one warden (Willie) located randomly and uniformly over the unit square shown in Fig.~\ref{fig:SysMod}, Alice can reliably and covertly transmit $  \mathcal{O}(m^{\gamma/2}\sqrt{n}) $ bits to Bob over $n$ uses of the channel.

\begin{comment}
Conversely, if Alice attempts to transmit $\omega(m^{\gamma/2} \sqrt{n}) $ bits to Bob over $n$ uses of the channel, either there exists a detector that Willie can use to detect her with arbitrarily low sum of error probabilities $\mathbb{P}_{FA} + \mathbb{P}_{MD}$ or Bob cannot decode the message with arbitrarily low probability of error.
\end{comment}
\end{thm}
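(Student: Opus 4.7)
The plan is to reuse the Theorem 1 construction — turn on the friendly node closest to Willie, keep all others silent — but take expectations over Willie's uniform location $W$ in addition to the friendly-node positions $R$. The key new phenomenon is that $d_{w,a}$ is now random, so the relative entropy bound~\eqref{eq:thm1d2} becomes proportional to $1/d_{w,a}^{2\gamma}$ and can blow up when Willie lands close to Alice; since $\mathbb{E}_W[1/d_{w,a}^{2\gamma}]$ already diverges for any $\gamma \geq 1$, a naive integration over $W$ fails and some form of conditioning is required.

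For covertness, I would extend~\eqref{eq:th1e1} to include the joint expectation $\mathbb{E}_{R,W}$, then split on the event $\{d_{w,a} \leq \eta\}$ for a small constant $\eta > 0$. On this event I would use the trivial lower bound $\mathbb{P}_{FA} + \mathbb{P}_{MD} \geq 0$, which costs at most $\mathbb{P}(d_{w,a} \leq \eta) = O(\eta^2)$ by uniformity of $W$. On the complementary event the Pinsker/KL chain of Theorem 1 goes through essentially verbatim, with $d_{w,a}^{\gamma} \geq \eta^{\gamma}$ playing the role of the constant $(1/2)^{\gamma}$ and the friendly-node expectation~\eqref{eq:totalex1} unchanged, producing the same constraint $P_f \leq c m^{\gamma/2}/\sqrt{n}$ with an $\eta$-dependent constant $c$. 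Choosing $\eta$ so that $\pi\eta^2 < \epsilon/2$ first, and then $c$ to make the Pinsker piece below $\epsilon/2$, delivers covertness.

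For reliability I would keep Bob's ML bound from Theorem 1 and control $\sigma_b^2 = \sigma_{b_0}^2 + P_r/d_{r,b}^{\gamma}$ by conditioning on $\mathcal{E} = \{d_{r,w} \leq \delta\} \cap \{d_{w,b} \geq \delta_0\}$ for small constants $\delta < \delta_0$. On $\mathcal{E}$ the triangle inequality gives $d_{r,b} \geq \delta_0 - \delta > 0$, so $\sigma_b^2$ is bounded above by a constant and the Theorem 1 rate computation yields $\mathcal{O}(m^{\gamma/2}\sqrt{n})$ bits. The bad event $\mathcal{E}^c$ contributes $o(1)$ to the averaged decoding error because $\mathbb{P}(d_{r,w} > \delta) = e^{-m\pi\delta^2}$ vanishes as $m \to \infty$, while $\mathbb{P}(d_{w,b} < \delta_0)$ can be made arbitrarily small by shrinking $\delta_0$, since the Willie region has bounded area near Bob.

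The main obstacle will be keeping the two scales of randomness cleanly separated — Willie's position has no $n$- or $m$-dependence, whereas the friendly-node distances concentrate as $m \to \infty$ — and verifying that the constants introduced by $\eta$ and $\delta_0$ only affect the prefactor, not the $m^{\gamma/2}\sqrt{n}$ scaling. Since both constant choices depend only on the fixed tolerances $\epsilon$ and $\zeta$, the same square-root throughput law as in Theorem 1 is recovered.
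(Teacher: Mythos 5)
Your proposal is correct and follows essentially the same route as the paper's proof: the paper likewise truncates on the event $d_{w,a}>\psi$ with $\mathbb{P}(d_{w,a}\le\psi)=\pi\psi^2/2$ (equivalent to your trivial lower bound of zero on the bad event via the law of total expectation), reuses the Theorem~1 bound $\mathbb{E}_{R,W}\bigl[m^{\gamma/2}/\sigma_w^2\,\big|\,d_{w,a}>\psi\bigr]\le \Gamma(\gamma/2+1)/(2P_r\pi^{\gamma/2+1})$, and for reliability bounds $\mathbb{P}(d_{r,b}\le\phi)\le\mathbb{P}(d_{w,b}\le 2\phi)+\mathbb{P}(d_{r,w}\ge\phi)$, which is the same triangle-inequality decomposition you propose with $\delta_0=2\phi$ and $\delta=\phi$. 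Your choices of $\eta$ and $\delta_0$ as functions of $\epsilon$ and $\zeta$ alone match the paper's $\psi=\sqrt{2\epsilon/\pi}$ and $\phi=\sqrt{\zeta/(2\pi)}$, so the $\mathcal{O}(m^{\gamma/2}\sqrt{n})$ scaling is preserved exactly as in the paper.
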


\begin{proof}
  
\textbf{Construction:} We use the same construction and strategy as in Theorem 1.

\textbf{Analysis:} By Eqs.~\eqref{eq:th1e1} and~\eqref{eq:thm1d2}
\begin{align}
\nonumber  & \label{eq:expcond1} \mathbb{E}_{R,W} [ \mathbb{P}_{FA} + \mathbb{P}_{MD}|d_{w,a}>\psi] \\
&  \geq  1- \sqrt{\frac{n}{2}} \mathbb{E}_{R,W}\left[\left.\frac{P_f}{2 \sigma_w^2 d_{w,a}^{\gamma}}\right|d_{w,a}>\psi\right]
\end{align}

\noindent where $\mathbb{E}_{R,W}$ denotes the expectation over all locations of friendly nodes and Willie and $\psi$ is a parameter such that $0 < \psi <\frac{1}{2}$. Suppose Alice sets ${P}_f \leq \frac{c m^{\gamma/2}}{\sqrt{n}} $ where $c$ is a constant defined later. Therefore,
\begin{align} 
&\nonumber  \mathbb{E}_{R,W} \left[\left. \mathbb{P}_{FA} + \mathbb{P}_{MD}\right|d_{w,a}>\psi\right] \\& \nonumber \geq 1-\frac{c}{2 \sqrt{2}} \mathbb{E}_{R,W}\left[\left.\frac{m^{\gamma/2}}{{{\sigma}}_w^2 d_{w,a}^{\gamma}}\right|d_{w,a}>\psi\right]\\
 \label{eq:th2eq1}
& \geq 1-\frac{c }{2 \sqrt{2}\psi^{\gamma}}
\mathbb{E}_{R,W}\left[\left.\frac{m^{\gamma/2}}{{{\sigma}}_w^2}\right|d_{w,a}>\psi\right],
\end{align}

\noindent As in Eq.~\eqref{eq:totalex1}, $\mathbb{E}_{R,W}\left[\left.\frac{m^{\gamma/2}}{{{\sigma}}_w^2}\right|d_{w,a}>\psi\right] \leq \frac{\Gamma \left(\gamma/2+1\right)}{2 P_r \pi^{\gamma/2+1}}$. Therefore,
\begin{comment}
\begin{align}\label{eq:totalex2}
\nonumber \mathbb{E}_{R,W}\left[\left.\frac{m^{\gamma/2}}{{{\sigma}}_w^2}\right|d_{w,a}>\psi\right] &=  \mathbb{E}_{R,W}\left[\left.\frac{m^{\gamma/2}}{\sigma_{w_0}^2+\frac{P_r}{d_{r,w}^\gamma}}\right|d_{w,a}>\psi\right] \\
\nonumber & \leq \frac{m^{\gamma/2}}{P_r} \mathbb{E}_{R,W}\left[\left.d_{r,w}^\gamma \right|d_{w,a}>\psi\right]\\
\nonumber & = \frac{2 m^{\gamma/2+1} \pi}{P_r} \mathbb{E}_{W}\left[\int_{x=0}^{x=\infty} x^{\gamma+1} e^{-m \pi x^2 } dx\right]\\
  &= \frac{\Gamma \left(\gamma/2+1\right)}{2 P_r \pi^{\gamma/2+1}} 
\end{align}
\noindent where $\mathbb{E}_{W}[.]$ is the expectation over all possbile locations of Willie. Thus

\end{comment}
\begin{align} \label{eq:th2eq2}
 \mathbb{E}_{R,W} \left[\left. \mathbb{P}_{FA} + \mathbb{P}_{MD}\right|d_{w,a}>\psi\right] & \geq 1-\frac{c \Gamma \left(\gamma/2+1\right)}{4 \sqrt{2} \psi^{\gamma} P_r \pi^{\gamma/2+1}} ,
\end{align}

\noindent Since $\psi\leq \frac{1}{2}$, $ \mathbb{P}(d_{w,a}>\psi) = 1-\pi \frac{\psi^2}{2}$. The law of total expectation yields
\begin{align} 
&\mathbb{E}_{R,W} [ \mathbb{P}_{FA} + \mathbb{P}_{MD}]  \nonumber \\
& \nonumber\geq  \mathbb{E}_{R,W} \left[\left. \mathbb{P}_{FA} + \mathbb{P}_{MD}\right|d_{w,a}>\psi\right]\; \mathbb{P}(d_{w,a}>\psi)\\
& \nonumber \geq \left( 1-\frac{c \Gamma \left(\gamma/2+1\right)}{4 \sqrt{2} \psi^{\gamma} P_r \pi^{\gamma/2+1}}\right)\left(1-\pi \frac{\psi^2}{2}\right)\\
& \geq \left( 1-\frac{c \Gamma \left(\gamma/2+1\right)}{4 \sqrt{2} \psi^{\gamma} P_r \pi^{\gamma/2+1}}-\pi \frac{\psi^2}{2}\right)\
\end{align}

\noindent Now, first choosing $\psi =\sqrt{\frac{2 \epsilon}{\pi}}$ and then  $c =  \epsilon \left(\frac{\Gamma \left(\gamma/2+1\right)}{4 \sqrt{2}  \psi^{\gamma} P_r \pi^{\gamma/2+1}} \right)^{-1}$, $\mathbb{E}_{R,W} [ \mathbb{P}_{FA} + \mathbb{P}_{MD}] \geq  1-\epsilon$ for any $\epsilon>0$ as long as $P_f = \mathcal{O}\left(\cfrac{m^{\gamma/2}}{\sqrt{n}}\right)$.

Next, we analyze Bob's ML decoder. The law of total expectation yields
\begin{align}\label{eq:EPEBobth2}
\nonumber \mathbb{P}_e &=  \mathbb{E}_{R,W}[\mathbb{P}_e(\sigma_b^2,d_{w,a}^2)]\\ \nonumber 
&\leq \mathbb{E}_{R,W}[\mathbb{P}_e(\sigma_b^2,d_{w,a}^2)|d_{r,b}>  \phi] \\
 & \phantom{=} +  \mathbb{P}\left(d_{r,b}\leq   \phi \right)
\end{align}

\noindent where $d_{r,b}$ is the distance between Bob and the closest friendly node to Willie, and $0<\phi\leq1$. If the rate is set to $R= \frac{\rho}{2} \log_2 \left( 1+ \frac{ c  m^{\gamma/2} }{ 2 \sqrt{n} \left(\sigma_{b_0}^2+\frac{P_r}{\phi^{\gamma}}\right)} \right) $, when $0 < \rho < 1$, by~\eqref{eq:th1bobpe}, the first term on the RHS of Eq.~\eqref{eq:EPEBobth2} is:
\begin{align}
&\mathbb{E}_{R,W}[\mathbb{P}_e(\sigma_b^2,d_{w,a}^2)|d_{r,b}>  \phi]     \leq  \left({1+{ \frac{c(1-\rho)  m^{\gamma/2} \sqrt{n}}{ 4(\sigma_{b_0}^2+\frac{P_r}{\phi^{\gamma}})}}}\right)^{-1}
\label{eq:th2bob1}
\end{align}

\begin{comment}

% why eq:th2bob2} is correct:
\noindent The third term is

\begin{align}\label{eq:th2bob22}
 \mathbb{P}\left(\{d_{r,b}\leq   \phi\} \cap \{d_{w,b}>   \phi\} \right) 
 &\nonumber  \leq \mathbb{P}\left(d_{rw}> d_{w,b}-\phi\right)\\
 & = e^{-\pi m \left(d_{w,b}-\phi\right)^2}
\end{align}
\end{comment}

\noindent Since $m=o\left(n^{{1}/{\gamma}}\right)$, $\lim\limits_{m,n \to \infty} \mathbb{E}_{R,W}[\mathbb{P}_e(\sigma_b^2,d_{w,a}^2)|d_{r,b}>  \phi] = 0$. Now, consider $\mathbb{P}\left(d_{r,b}\leq \phi\right)$. Since $\{d_{r,b} \leq \phi\} \subset \{ \{d_{w,b} \leq 2 \phi \} \cup \{d_{r,w} \geq  \phi \}  \} $, $\mathbb{P} \left(d_{r,b} \leq \phi\right) \leq \mathbb{P} \left(d_{w,b} \leq 2 \phi\right)  + \mathbb{P} \left(d_{r,w} \geq  \phi\right) $. As $m \to \infty$, $\mathbb{P} \left(d_{r,w} \geq  \phi\right) \to 0$ and thus the right side approaches $2 \pi \phi^2$. Thus, setting $\phi = \sqrt{\frac{\zeta}{2 \pi}}$ means $\lim\limits_{m,n \to \infty} \mathbb{P}_e < \zeta$ for any $0< \zeta < 1 $.

Next, we calculate the average number of bits that Bob can receive. Similar to the approach that leads to Eq.~\eqref{eq:thm1bobp3}, we can easily show that $nR \geq  \frac{\sqrt{n} \rho c  m^{\gamma/2} }{ 4 \left(\sigma_{b_0}^2+\frac{P_r}{\phi^{\gamma}}\right) }$. Thus, Bob receives $\mathcal{O}(m^{\gamma/2} \sqrt{n})$ bits in $n$ channel uses.
 \end{proof}

\section{Covert Communication in the Presence a Multiple Collaborating Wardens}
\label{mult_willies}

In this section, we consider case when there are $N_w$ collaborating Willies located independently in the 1 by 1 square.

\begin{thm}

When friendly nodes are distributed such that $m={o}\left(n^{{1}/{\gamma}}\right)$ and $m=\omega(1)$ and 
$N_w=\mathcal{O}\left(m^{\frac{ \gamma}{\gamma+2}}\right)$ collaborating Willies are uniformly and independently distributed over the unit square  shown in Fig.~\ref{fig:SysMod}, Alice can reliably and covertly transmit $  \mathcal{O}\left(\frac{m^{\gamma/2}\sqrt{n}}{N_w^{2+\gamma}}\right) $ bits to Bob over $n$ uses of the channel.
\begin{comment}

Conversely, if Alice attempts to transmit $\omega(\frac{m^{\gamma/2}\sqrt{n}}{N_w^2}) $ bits to Bob over $n$ uses of channel, either there exists a detector that Willie can use to detect her with arbitrarily low sum of error probabilities $\mathbb{P}_{FA} + \mathbb{P}_{MD}$ or Bob cannot decode the message with arbitrarily low probability of error.
\end{comment}
\end{thm}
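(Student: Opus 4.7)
The plan is to mimic the achievability arguments of Theorems 1 and 2, extended to the $N_w$-Willie setting. Alice uses a random Gaussian codebook of per-symbol power $P_f$ (to be chosen); for each Willie $W_k$ the single friendly node closest to $W_k$ is turned on, activating at most $N_w$ nodes of i.i.d.\ $\mathcal{N}(0,P_r)$ chatter. Bob employs the same ML decoder, and the collaborating Willies pool their $N_w\times n$ observations at a fusion center that applies the optimal joint hypothesis test.

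\textbf{Covertness.} Conditional on the geometry, at each channel use the fusion-center vector is Gaussian with covariance $\Sigma_0=\mathrm{diag}(\sigma_{w_1}^2,\ldots,\sigma_{w_{N_w}}^2)$ under $H_0$ and $\Sigma_1=\Sigma_0+P_f\,aa^{\top}$ under $H_1$, where $a_k=d_{w_k,a}^{-\gamma/2}$. The matrix-determinant lemma followed by Taylor expansion around $P_f=0$, in analogy with~\eqref{eq:thm1d2}, gives a per-symbol KL of order $\bigl(P_f\sum_k d_{w_k,a}^{-\gamma}\sigma_{w_k}^{-2}\bigr)^{\!2}$; Pinsker and Jensen then yield
\[
\mathbb{E}_{R,W}[\mathbb{P}_{FA}+\mathbb{P}_{MD}]\;\geq\;1-\tfrac{P_f\sqrt{n}}{2\sqrt{2}}\,\mathbb{E}_{R,W}\!\Bigl[\textstyle\sum_{k=1}^{N_w}\tfrac{1}{d_{w_k,a}^\gamma\sigma_{w_k}^2}\Bigr].
\]
I would condition on the event $\{d_{w_k,a}>\psi \text{ for every } k\}$ (whose complement costs at most $N_w\pi\psi^2$ by a union bound) and use the Poisson-closest-neighbor pdf~\eqref{eq:p1} to bound $\mathbb{E}[1/\sigma_{w_k}^2]=O(m^{-\gamma/2})$ exactly as in~\eqref{eq:totalex1}. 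Taking $\psi=\Theta(N_w^{-1/2})$ balances the two error contributions and forces $P_f=O\bigl(m^{\gamma/2}/(\sqrt{n}\,N_w^{1+\gamma/2})\bigr)$.

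\textbf{Reliability and throughput.} Bob's aggregate noise is $\sigma_b^2=\sigma_{b_0}^2+\sum_k P_r/d_{r_k,b}^\gamma$, so I must rule out any activated friendly node landing too close to Bob. The inclusion $\{d_{r_k,b}\leq\phi\}\subset\{d_{w_k,b}\leq 2\phi\}\cup\{d_{r_k,w_k}>\phi\}$ and a union bound give
\[
\mathbb{P}\bigl(\min_k d_{r_k,b}\leq\phi\bigr)\;\leq\;N_w\bigl(4\pi\phi^2+e^{-m\pi\phi^2}\bigr),
\]
which vanishes for $\phi=\Theta(N_w^{-1/2})$ once $m\pi\phi^2\to\infty$; the hypothesis $N_w=\mathcal{O}(m^{\gamma/(\gamma+2)})$ is precisely what sustains this while keeping the throughput non-trivial. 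On the good event $\sigma_b^2=O(N_w/\phi^\gamma)=O(N_w^{1+\gamma/2})$, and the ML decoder analysis of \eqref{eq:th1bobpe}--\eqref{eq:frstpart}, applied to the rate $R=(\rho/2)\log_2(1+P_f/(2\sigma_b^2))$, gives $\mathbb{P}_e\to 0$ and $nR=\Omega(P_f\sqrt{n}/\sigma_b^2)=\Omega\bigl(m^{\gamma/2}\sqrt{n}/N_w^{2+\gamma}\bigr)$, the claimed scaling.

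\textbf{Main obstacle.} The delicate step is the joint tuning of the two guard distances $\psi$ and $\phi$. The fusion-center KL is a \emph{squared} sum over the $N_w$ Willies (rather than a sum of squares), so a naive appeal to independence loses a factor of $N_w$; conditioning tightly enough on Alice-to-Willie distance forces $\psi\sim N_w^{-1/2}$, which aggravates the $\psi^{-\gamma}$ blow-up in the per-symbol KL and pushes $P_f$ down by an additional factor of $N_w^{\gamma/2}$. By the same geometry, a friendly node selected because it is close to some Willie near Bob is also close to Bob, so $\phi$ must shrink at the same rate, inflating Bob's interference by $N_w^{1+\gamma/2}$. Balancing these two shrinkages against the Poisson concentration $m\pi\phi^2\to\infty$ is what produces both the $N_w^{2+\gamma}$ denominator in the throughput and the regime restriction $N_w=\mathcal{O}(m^{\gamma/(\gamma+2)})$.
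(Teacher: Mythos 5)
Your proposal is correct and follows essentially the same route as the paper's proof: the same closest-friendly-node-per-Willie construction, the rank-one-update KL divergence at the fusion center bounded via Pinsker and a Taylor expansion, guard-distance conditioning with $\psi,\phi=\Theta(N_w^{-1/2})$ yielding $P_f=\mathcal{O}\bigl(m^{\gamma/2}/(\sqrt{n}\,N_w^{1+\gamma/2})\bigr)$, and the same union-bound/ML-decoder analysis giving the $N_w^{2+\gamma}$ throughput penalty and the $N_w=\mathcal{O}(m^{\gamma/(\gamma+2)})$ restriction. No substantive differences to report.
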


\begin{proof}

\textbf{Construction:} The codebook construction is same as that in Theorem 1. Analogously to the constructions of Theorems 1 and 2, Alice and Bob's strategy is to turn on the closest friendly node to each Willie and keep all other friendly nodes off, whether Alice is transmitting or not.

\textbf{Analysis:} When Willie applies the optimal hypothesis test, Pinsker's Inequality (Lemma 11.6.1 in \cite{coverthomas}) yields\cite{bash_jsac2013}
\begin{equation}\label{eq:basic2}
\mathbb{P}_{FA} + \mathbb{P}_{MD} \geq 1- \sqrt{\frac{1}{2} \mathcal{D}(\mathbb{P}_1 || \mathbb{P}_0)} .
\end{equation}

Here, $\mathbb{P}_0$ and  $\mathbb{P}_1$ are the joint probability distributions of Willies' channels observations  for the $H_0$ and $H_1$ hypotheses respectively; in other words
\begin{equation}
\mathbb{P}_0 = [\mathbb{P}_0^{(w_1)^T} \mathbb{P}_0^{(w_2)^T} \dots \mathbb{P}_0^{(w_{N_w})^T}]^T
\end{equation}

\begin{equation}
\mathbb{P}_1 = [\mathbb{P}_1^{(w_1)^T} \mathbb{P}_1^{(w_2)^T} \dots \mathbb{P}_1^{(w_{N_w})^T}]^T
\end{equation}

\noindent where $\mathbb{P}_0^{(w_k)} $ is the vector probability distribution of the channel observation of Willie $W_k$ ($ 1 \leq k \leq N_w$) when $H_0$ is true and includes $n$ elements with the same probability distribution $\mathbb{P}_{w_k} = \mathcal{N}(0, \sigma_{w_k}^2) $. In addition, $\mathbb{P}_1^{(w_k)} $ is the channel observation of Willie $W_k$ when $H_1$ is true and includes $n$ elements, each with the same probability distribution $\mathbb{P}_{w_k} = \mathcal{N}(0, \sigma_{w_k}^2+\frac{P_f}{d_{w_k,a}^{\gamma}}) $.

\noindent The relative entropy between two multivariate normal distributions $\mathbb{P}_1$ and  $\mathbb{P}_0$ is given by \cite{klref}:
\begin{align} \label{eq:kl}
 \nonumber \mathcal{D}(\mathbb{P}_1 || \mathbb{P}_0) &= { 1 \over 2 } \left(
  \mathrm{tr} \left( \Sigma_0^{-1} \Sigma_1 \right) + \left( \mu_0 - \mu_1\right)^\top \Sigma_0^{-1} ( \mu_0 - \mu_1 )\right. \\
  &\left. \phantom{=}  - \mathrm{dim}\left(\Sigma_0\right) - \ln \left( { | \Sigma_1 | \over | \Sigma_0 | } \right)  \right)
 \end{align}

\noindent where $\mathrm{tr}(.)$, $|.|$, and $\mathrm{dim}(.)$ denote the trace, determinant and dimension of a square matrix respectively, $\mu_0=0$, $\mu_1=0$ are the mean vectors, and $\Sigma_0$, $\Sigma_1$ are nonsingular covariance matrices of $\mathbb{P}_0$ and  $\mathbb{P}_1$ respectively and are given by
\begin{align}
\Sigma_0 & =S\otimes I_{n \times n}\\
\Sigma_1 & = \big(S + P_f U U^T\big)\otimes I_{n \times n}
\end{align}

\noindent where $ S= \mathrm{diag}(\sigma_{w_1}^2, ... \, ,\sigma_{w_{N_w}}^2)$, $\otimes$ denotes the Kronecker product between two matrices, $I_{n \times n}$ is the identity matrix of size $n$, and $U$ is a column vector of size $N_w$ given by
\begin{equation}
   U= \begin{bmatrix}
    \frac{1}{d_{w_1,a}^{\gamma/2}} & \frac{1}{d_{w_2,a}^{\gamma/2}} & \dots & \frac{1}{d_{w_{N_w},a}^{\gamma/2}} \
  \end{bmatrix} ^{T}
\end{equation}

\noindent Next, we calculate the relative entropy in~\eqref{eq:kl}. The first term of the RHS of~\eqref{eq:kl} is:
\begin{align}
\nonumber \mathrm{tr} \left( \Sigma_0^{-1} \Sigma_1 \right)& = n \sum_{k=1}^{N_w} \frac{1}{\sigma_{w_k}^2}\left( \sigma_{w_k}^2 + \frac{P_f}{d_{w_k,a}^{\gamma}}\right)\\
 & = n N_w+ n \sum_{k=1}^{N_w} \frac{P_f}{d_{w_k,a}^{\gamma} \sigma_{w_k}^2}
\end{align}

\noindent Then,
\begin{align}
\nonumber\left|{\Sigma_0}\right|&=  \left|S\otimes I_{n \times n}\right|\\ \label{eq:kron_sig0}&=\left|S\right|^n | I_{n \times n}|^{N_w} \\ \nonumber &=\left|S\right|^n\\
& = \left(\prod_{k=1}^{N_w} \sigma_{w_k}^2 \right)^n.
\end{align}

\noindent where~\eqref{eq:kron_sig0} is due to the determinant of the kronecker product property presented in~\cite{matrixkron}. Because each of the Willies has non-zero noise variance, $S$ is nonsingular. Therefore,
\begin{align} 
\nonumber \left|{\Sigma_1}\right|& =\left|S + P_f U U^T\right|^n | I_{n \times n}|^{N_w}\\ 
\nonumber &=\left|S + P_f U U^T\right|^n\\
\nonumber  &= \left|S\right|^n \left|I+P_f S^{-1} U U^T \right|^n\\
\label{eq:matrixdetlemma} &= \left|S\right|^n \left(1+P_f  U^T S^{-1} U\right)^n\\
 &= |\Sigma_0| \left(1+\sum_{k=1}^{N_w} \frac{P_f}{d_{w_k,a}^{\gamma} \sigma_{w_k }^2} \right)^n
\end{align}

\noindent where step~\eqref{eq:matrixdetlemma} is due to Lemma 1.1 in \cite{matrixDing}. Therefore, 
\begin{align}
\ln \left( { | \Sigma_1 | \over | \Sigma_0 | } \right) &= n \ln{\left(1+\sum_{k=1}^{N_w} \frac{P_f}{d_{w_k,a}^{\gamma} \sigma_{w_k }^2} \right)}.
\end{align}

\noindent Thus,
\begin{align}
\mathcal{D}(\mathbb{P}_1 || \mathbb{P}_0) = {n \over 2}\left(  \sum_{k=1}^{N_w} \frac{P_f}{d_{w_k,a}^{\gamma} \sigma_{w_k}^2} - \ln{\left(1+\sum_{k=1}^{N_w} \frac{P_f}{d_{w_k,a}^{\gamma} \sigma_{w_k}^2}\right)} \right).
\end{align}

Suppose Alice sets her average symbol power $P_f \leq \frac{c  m^{\gamma/2}}{\sqrt{n} N_w} $ where $c$ is a constant defined later. Since $m=o\left(n^{{1}/{\gamma}}\right)$, for $n$ large enough $\sum_{k=1}^{N_w} \frac{P_f}{d_{w_k,a}^{\gamma} \sigma_{w_k}^2} < 1$. Therefore
\begin{equation}
\label{eq:dformulw}\mathcal{D}(\mathbb{P}_1 || \mathbb{P}_0) \leq {n \over 4}\left( \sum_{k=1}^{N_w} \frac{P_f}{d_{w_k,a}^{\gamma} \sigma_{w_k}^2} \right)^2.
\end{equation}

\noindent as long as $d_{w_k,a} > \kappa$ for all $k$. Assume $Q$ is the event that $d_{w_k,a}>\kappa $ for all $k$ where $0<\kappa<\frac{1}{2}$. By Eqs.~\eqref{eq:basic2} and~\eqref{eq:dformulw}
\begin{align} \label{eq:th4ERW}
&\nonumber \mathbb{E}_{R,W}\left[\left.\mathbb{P}_{FA} + \mathbb{P}_{MD}\right|Q\right] \\
&\nonumber \geq 1- \mathbb{E}_{R,W} \left[\left.  {1 \over 2} \sqrt{\frac{n}{2}} \sum_{k=1}^{N_w} \frac{P_f}{d_{w_k,a}^{\gamma} \sigma_{w_k}^2}\right|Q\right]\\
&\nonumber \geq 1-{ c \over {2 \sqrt{2} N_w }}  \mathbb{E}_{R,W}\left[\left.\sum_{k=1}^{N_w} \frac{m^{\gamma/2}}{d_{w_k,a}^{\gamma} \sigma_{w_k}^2}\right|Q\right]\\
&  \geq 1-{ c  \over {2 \sqrt{2} N_w \kappa^{\gamma}}}  \sum_{k=1}^{N_w} \mathbb{E}_{R,W}\left[\left. \frac{m^{\gamma/2}}{ \sigma_{w_k}^2}\right|Q\right]
\end{align}

\noindent As we obtained in~\eqref{eq:totalex1}, $\mathbb{E}_{R,W}\left[\left.\frac{m^{\gamma/2}}{{{\sigma}}_{w_k}^2}\right|d_{w_k,a}>\kappa\right] \leq \frac{\Gamma \left(\gamma/2+1\right)}{2 P_r \pi^{\gamma/2+1}}$ for all $k$. Therefore,
\begin{align} \label{eq:th4ERW2}
 \mathbb{E}_{R,W}\left[\left.\mathbb{P}_{FA} + \mathbb{P}_{MD}\right|Q\right] &  \geq 1- \frac{c \Gamma \left(\gamma/2+1\right)}{2 \sqrt{2} \kappa^{\gamma} 2 P_r \pi^{\gamma/2+1}}
\end{align}

\noindent Since $\kappa<\frac{1}{2}$, $\mathbb{P}(Q) = \left(1-\frac{\pi \kappa^2}{2}\right)^{N_w}$. Then, the law of total expectation yields
\begin{align} 
\nonumber &\mathbb{E}_{R,W} [ \mathbb{P}_{FA} + \mathbb{P}_{MD}] \\&  \nonumber \geq  \mathbb{E}_{R,W} \left[\left. \mathbb{P}_{FA} + \mathbb{P}_{MD}\right|Q\right]\; \mathbb{P}(Q) \\
&\nonumber  \geq \left(1- \frac{c \Gamma \left(\gamma/2+1\right)}{2 \sqrt{2} \kappa^{\gamma} 2 P_r \pi^{\gamma/2+1}}\right)\left(1-\frac{\pi \kappa^2}{2}\right)^{N_w}\\
&\geq \left(1-\frac{\pi \kappa^2}{2}\right)^{N_w}- \left(\frac{c \Gamma \left(\gamma/2+1\right)}{2 \sqrt{2} \kappa^{\gamma} 2 P_r \pi^{\gamma/2+1}}\right).
\end{align}

\noindent Thus, for any $\epsilon > 0$ and $N_w$,  $\kappa=\sqrt{\frac{2}{\pi}\left(1-\left(1-\frac{\epsilon}{2}\right)^{\frac{1}{N_w}}\right)}$ and $c=\frac{\epsilon}{2}\left(\frac{\Gamma \left(\gamma/2+1\right)}{4 \sqrt{2}  \kappa^{\gamma} P_r \pi^{\gamma/2+1}} \right)^{-1}$ yields  $\mathbb{E}_{R,W} [ \mathbb{P}_{FA} + \mathbb{P}_{MD}] \geq  1-\epsilon$ as long as $P_f = \mathcal{O}\left(\cfrac{m^{\gamma/2}}{\sqrt{n} N_w^{1+\gamma/2}}\right)$.

Next, we analyze Bob's ML decoding error probability over all possible codewords as well as the locations of Willies and closest friendly node to each Willie. Bob's noise power is given by
\begin{align}
\label{eq:th4bob0} \sigma_b^2 & \leq \sigma_{b_0}^2 + \sum_{k=1}^{N_w} \frac{P_r}{d_{r_k,b}^2}
\end{align}
\noindent where $d_{r_k,b}$ is the distance between Bob and the closest friendly node to Willie $W_k$. Suppose $G$ is the event that $d_{r_k,b} > \delta$ for all $k$ where $0\leq\delta\leq1$. Therefore
\begin{align}\label{eq:EPEBobth41}
\mathbb{P}_e&= \mathbb{E}_{R,W}[\mathbb{P}_e\left(\sigma_b^2\right)] \nonumber\\
& \leq \mathbb{E}_{R,W}[\mathbb{P}_e\left(\sigma_b^2\right)|G] +   \mathbb{P}(\bar{G})
\end{align}

\noindent Similar to what we did in ~\eqref{eq:Pebasic}-\eqref{eq:th1bobpe}, if the rate is set to $R= \frac{\rho}{2} \log_2 \bigg( 1+ \frac{c_1 m^{\gamma/2} }{2 N_w^{1+\gamma/2} \sqrt{n} \left(\sigma_{b_0}^2+N_w \frac{P_r}{\delta^\gamma}\right)} \bigg) $, where $c_1 = c N_w^{\gamma/2}$ and $0 < \rho < 1$,  the first term of the RHS of~\eqref{eq:EPEBobth41} is
\begin{align}\label{eq:EPEBobth42}
\nonumber \mathbb{E}_{R,W}[\mathbb{P}_e|G] &\leq  \mathbb{E}_{R,W}[\mathbb{P}_e|d_{r_1,b}=\dots=d_{r_{N_w},b}=\delta]\\
& =  \left({1+{ \frac{c_1 (1-\rho)  m^{\gamma/2} \sqrt{n}}{4 N_w^{1+\gamma/2}\left(\sigma_{b_0}^2 + \frac{N_w P_r}{\delta^{\gamma}}\right)}}}\right)^{-1}.
     \end{align}

\noindent For a given $N_w$, choose $\delta = \frac{1}{2}\sqrt{\frac{2 \zeta}{\pi N_w}}$. If we set $N_w=o\left(m^{\frac{ \gamma}{2+\gamma}}\right)$, $\lim\limits_{m,N_w \to \infty} \mathbb{E}_{R,W}[\mathbb{P}_e|G]=0$.  Consider $\mathbb{P}\left(\bar{G}\right)$
\begin{align}\label{eq:EPEBobth43}
\nonumber \mathbb{P}\left(\bar{G} \right)&  =  \mathbb{P} \left(\bigcup_{k=1}^{N_w} d_{r_k,b}\leq  \delta \right)\\
    & \nonumber \leq  \sum\limits_{k=1}^{N_w}   \mathbb{P} \left( d_{r_k,b}\leq  \delta \right)\\  
     &\nonumber = N_w    \mathbb{P} \left( d_{r_1,b}\leq  \delta \right)\\
     & \nonumber \leq N_w \left(\mathbb{P}\left(d_{w_1,b} \leq 2 \delta \right)  +\mathbb{P}\left(d_{r_1,w_1} \geq  \delta \right) \right)\\
     & \leq N_w \left(\pi \frac{(2 \delta)^2}{2}+e^{-m \pi \delta^2}\right)
\end{align}

\noindent  Then, $N_w e^{-m \pi \delta^2}\to 0$ as $m\to \infty$, and  $\lim\limits_{m,n \to \infty} \mathbb{P}_e < \zeta$ for any $0<\zeta<1$.

Now, we calculate the number of bits that Bob receives. Similar to the approach that leads to Eq.~\eqref{eq:thm1bobp3}, we can easily show that $nR \geq  \frac{\sqrt{n} \rho c_1  m^{\gamma/2} }{ 4 N_w^{1+\gamma/2} \left(\sigma_{b_0}^2+N_w \frac{P_r}{\delta^{\gamma}}\right) }$. Since $\delta = \frac{1}{2}\sqrt{\frac{2 \zeta}{\pi N_w}}$,  for $m, n, N_w$ large enough
\begin{align}
 \nonumber nR   \geq \frac{\sqrt{n} \rho c_1  m^{\gamma/2} { \left(  \frac{\zeta}{2 \pi}\right)}^{\gamma/2}}{4 N_w^{2+\gamma} {P_r}}  \end{align}

\noindent Therefore, Bob receives $\mathcal{O}\left(\frac{m^{\gamma/2} \sqrt{n}}{N_w^{2+\gamma}}\right)$ bits in $n$ channel uses.

\end{proof}

\section{Conclusion}

In this paper, we have considered the first step in establishing low probability of detection (LPD) 
communications in a network scenario.  We established that Alice can transmit $O(m^{\gamma/2}\sqrt{n})$ bits
reliably to the desired recipient Bob in $n$ channel uses without detection by an adversary Willie if
randomly distributed system nodes of density $m$ are available to aid in jamming Willie; conversely, no higher
covert rate is possible.  The presence of multiple collaborating wardens inhibits communication in two
separate ways - increasing the effective signal-to-noise ratio (SNR) at the wardens' decision point, and requiring more interference which inhibits Bob's ability to reliably decode the message.  Future work consists  of embedding the results of this single-hop formulation into large multi-hop covert networks.

\bibliographystyle{ieeetr}
%\bibliography{mycitation}

\end{document}